\renewcommand{\qed}{\hfill{\rule{2mm}{2mm}}}
\renewenvironment{proof}[1][]{\begin{trivlist}
\item[\hspace{\labelsep}{\bf\noindent Proof#1:\/}] }{\qed\end{trivlist}}
\newcommand{\C}{\mathbb{C}}
\newcommand{\bS}{\overline{S}}
\newcommand{\R}{\mathbb{R}}
\newcommand{\E}{\mathbf{{E}}}
\newcommand{\cE}{\mathcal{{E}}}
\newcommand{\cM}{\mathcal{{M}}}
\newcommand{\cN}{\mathcal{{N}}}
\newcommand{\set}[1]{{\left\{#1\right\}}}
\newcommand{\B}{\set{0,1}}
\newcommand{\Span}{\mathop{\rm Span}\nolimits}
\newcommand{\eps}{\varepsilon}
\newcommand {\xor} {\oplus}
\newcommand{\Tribes}{\mathsf{Tribes}}
\newcommand{\ket}[1]{\left|#1\right\rangle}
\newcommand{\tensor}{\otimes}
\newcommand{\remove}[1]{}
\newcommand{\eqdef}{\stackrel{\rm def}{=}}
\newcommand{\half}{{1 \over 2}}
\newcommand {\wh} {\widehat}
\newcommand{\singeltons}[0]{{\cE_{\mathbf{singeltons}}}}
\newcommand{\bitflip}[0]{{\cE_{\mathbf{cbit}}}}
\newcommand{\phaseflip}[0]{{\cE_{\mathbf{cphase}}}}
\newcommand{\synd}{{\sf synd}}
\theoremstyle{plain}
\newtheorem{theorem}{Theorem}[section]
\newtheorem{lemma}[theorem]{Lemma}
\newtheorem*{claim}{Claim}
\newtheorem{fact}[theorem]{Fact}
\theoremstyle{definition}
\newtheorem{definition}{Definition}[section]
\theoremstyle{remark}
\begin{document}

\title{Approximate quantum error correction for correlated noise}

\author{
Avraham Ben-Aroya\thanks{Department of Computer Science, Tel-Aviv
University, Tel-Aviv 69978, Israel. Supported by the Adams
Fellowship Program of the Israel Academy of Sciences and
Humanities, by the European Commission under the Integrated
Project QAP funded by the IST directorate as Contract Number
015848 and by USA Israel BSF grant 2004390. Email: abrhambe@post.tau.ac.il.}
\and Amnon Ta-Shma\thanks{Department of Computer Science, Tel-Aviv
University, Tel-Aviv 69978, Israel. Supported by the European Commission under the Integrated
Project QAP funded by the IST directorate as Contract Number
015848, by Israel Science Foundation grant 217/05 and by USA Israel BSF grant 2004390.
Email: amnon@tau.ac.il.
}}

\date{}

\maketitle

\null\vspace*{-35pt}

\vspace*{-7pt}

\begin{abstract}
Most of the research done on quantum error correction studies an
error model in which each qubit is affected by noise,
independently of the other qubits. In this paper we study a
different noise model -- one in which the noise may be correlated
with the qubits it acts upon.

We show both positive and negative results. On the one hand, we
show controlled-X errors cannot be \emph{perfectly} corrected, yet
can be \emph{approximately} corrected with sub-constant
approximation error. On the other hand, we show that no
non-trivial quantum error correcting code can approximately
correct controlled phase error with sub-constant approximation
error.
\end{abstract}

\section{Introduction}
One of the reasons for studying quantum error-correcting codes
(QECCs) is that they serve as building blocks for fault-tolerant
computation, and so might serve one day as central components in
an actual implementation of a quantum computer. Much work was done
trying to determine the threshold error, beyond which independent
noise\footnote{The independent noise model is a model in which
each qubit is affected by noise, with some probability,
independently of the other qubits.} can be dealt with by
fault-tolerant mechanisms (see the Ph.D. theses~\cite{R06,A07} and
references therein).

A few years ago there was some debate whether the independent
noise model is indeed a realistic noise model for quantum
computation or not (see, e.g.,~\cite{ALZ06}). This question should
probably be answered by physicists, and the answer to that is most
likely dependent on the actual realization chosen. Yet, while the
physicists try to build actual machines, and the theorists try to
deal with higher independent noise, it also makes sense to try and
extend the qualitative types of errors that can be dealt with. The
results in this paper are both optimistic and pessimistic. On the
one hand, we show there are noise models that can be approximately
corrected but not perfectly corrected, but on the other hand,
there is simple correlated noise that cannot even be approximately
corrected. It might be interesting to reach a better understanding
of what can be approximately corrected. Also, it might be
interesting to come up with other relaxations of quantum error
correction that deal better with correlated noise.

\subsection{Stochastic vs. Adversarial noise}

The basic problem we deal with is that of encoding a message such
that it can be recovered after being transmitted over a noisy
channel. Classically, there are two natural error models: Shannon's
independent noise model and Hamming's adversarial noise model. For
example, a typical noise model that is dealt with in Shannon's
theory, is one where each bit of the transmitted message is flipped
with independent probability $p$, whereas a typical noise model in
Hamming's theory is one where the adversary looks at the transmitted
message and chooses at most $t$ bits to flip. We stress that the
classical \emph{adversarial} noise model allows the adversary to
decide which noise operator to apply based on the specific codeword
it acts upon.

Remarkably, there are classical error correcting codes that solve
the problem in the adversarial noise model, which are almost as
powerful as the best error correcting codes that solve the problem
in the independent noise model. For instance, roughly speaking,
any code in the independent noise model must satisfy $r \le
1-H(p)$, where $r$ is the rate of the code, $p$ is the noise rate.
In the adversarial noise model, the Gilbert-Varshamov bound shows
there are codes with rate $r=1-H(\delta)$ and relative distance
$\delta$, though one can uniquely correct only up to half the
distance.\footnote{If we allow list-decoding, then almost up to
$1-r$ noise rate can be corrected.}

\subsection{The quantum case}

Let us now consider quantum error correcting codes (QECCs). The
standard definition of such codes limits the noise to a linear
combination of operators, each acting on at most $t$ qubits. A
standard argument then shows that a noise operator that acts on $n$
qubits, such that it acts independently on each qubit with
probability $p=t/n$, is very close to a linear combination of error
operators that act on only, roughly, $t$ qubits. Thus, any quantum
error correcting code (QECC) that corrects all errors on at most $t$
qubits, also \emph{approximately} corrects \emph{independent} noise
with noise rate about $t/n$. Therefore, the standard definition of
QECCs works well with independent noise.

As we said before, the classical \emph{adversarial} noise model
allows the adversary to decide which noise operator to apply based
on the specific codeword it acts upon. In the quantum model this
amounts to, say, applying a single bit-flip operator based on the
specific basis element we are given, or, in quantum computing
terminology, applying a controlled bit-flip. Controlled bit-flips
are limited (in that they apply only a single $X$ operator) highly
correlated (in that they depend on all the qubits of the input)
operators. Can QECC correct controlled bit-flip errors? Can QECC
\emph{approximately} correct such errors?

\subsection{Correcting controlled bit flip errors}

Before we proceed, let us first see that a QECC that corrects one
qubit error in the standard sense may fail for controlled bit flip
errors. Assume we have a quantum code of dimension $|C|$ that is
spanned by $|C|$ orthogonal codewords $\set{\phi_i}$, and can
correct any noise operator from $\mathcal{E}$ that is applied on any
vector $\phi \in \set{\phi_i}$. Specifically, for any noise operator
$E\in\mathcal{E}$, the quantum decoding algorithm maps a noisy word
$\phi_i'=E\phi_i$ to a product state $\phi_i \tensor
\ket{\synd(E)}$, where $\synd(E)$ is the error-syndrome associated
with $E$. Notice that $\synd(E)$ depends on $E$ alone and not on
$\phi_i$. Then, we can correct errors applied on any state in the
vector space \emph{spanned} by the basis vectors $\set{\phi_i}$. To
see that, notice that if we start with some linear combination
$\sum_{i=1}^k \alpha_i {\phi_i}$ and we apply the error $E$ on it,
then the corrupted word is $\sum_{i=1}^k \alpha_i E\phi_i$, and
applying the decoding procedure we get the state $(\sum_{i=1}^k
\alpha_i \phi_i) \tensor \ket{\synd(E)}$. Tracing out the syndrome
register we recover the original state. This property, however,
breaks down for controlled bit flip errors, where the error may
depend on the specific codeword $\phi_i$. In that case the corrupted
word is $\sum_{i=1}^k \alpha_i E_i\phi_i$. If we use the same
decoding procedure, and the decoded word is $\sum_{i=1}^k \alpha_i
\phi_i \tensor \ket{\synd(E_i)}$ and if we trace out the syndrome
register we end up with a state different then the original state.

The above argument shows that if we allow controlled bit-flip
errors, then the environment may get information about the codeword,
and thus corrupt it. This, by itself, is not yet an impossibility
proof, as it is possible that one can find a code that is immune to
controlled bit-flip errors. Unfortunately, an easy argument shows
that there is no non-trivial QECC that perfectly corrects such
errors (see Theorem~\ref{thm:exact-negative}). Therefore, while
there are asymptotically good QECC in the standard error model,
there are no non-trivial QECC correcting controlled \emph{single}
bit-flip errors.


\subsection{Approximate error-correction}
Summarizing the discussion above, we saw that no QECC can
\emph{perfectly} correct controlled bit-flip errors. We now ask
whether this also holds when we relax the perfect decoding
requirement and only require \emph{approximate decoding}. Namely,
suppose we only require that for any codeword $\phi$ and any allowed
error $E$, decoding $E\phi$ results in a state \emph{close} to
$\phi$. Can we then correct controlled bit-flip errors?

Somewhat surprisingly we show a positive answer to this question.
That is, we show a QECC of arbitrarily high dimension, that can
correct any controlled bit-flip error with sub-constant
approximation-error (see Theorem~\ref{thm:app-positive} for a
formal statement). This, in particular, shows that there are error
models that cannot be perfectly decoded, yet can be approximately
decoded. For the proof, we find a large dimension vector space
containing only low-sensitive functions.

Having that we increase our expectations and ask whether one can
approximately correct, say, any controlled single-qubit error.
However, here we show a negative result. We show that no
non-trivial QECC can correct controlled phase-errors with only
sub-constant approximation error (see
Theorem~\ref{thm:app-negative} for a formal statement). Namely, no
non-trivial QECC can handle, even approximately, correlated noise,
if the control is in the standard basis and the error is in the
phase.


\section{Preliminaries}
\label{sec:preliminiaries}
\subsection{Quantum error-correcting
codes} Let $\cN$ denote the Hilbert space of dimension $2^n$. $\cM$
is a $[n,k]$ quantum error correcting code (QECC) if it is a
subspace of $\cN$ of dimension $K \ge 2^k$. We call $n$ the
\emph{length} of code, and $K$ the \emph{dimension} of the code. For
two Hilbert spaces $\cN,\cN'$, $L(\cN,\cN')$ denotes the set of
linear operators from $\cN$ to $\cN'$.

\begin{definition}
A code $\cM$ \emph{corrects} $\cE \subset L(\cN,\cN')$ if for any
two operators $X,Y \in \cE$ and any two codewords $\phi_1,\phi_2 \in
\cM$, if $\phi_1^* \phi_2 = 0$ then  $(X\phi_1)^*(Y\phi_2) = 0$.
\end{definition}
\begin{fact}[{\cite[Section 15.5]{KSV02}}]\label{fact:QECC}
A code $\cM$ corrects $\cE$ if for any $X,Y \in \cE$, defining
$E=X^*Y$, there exists a constant $c(E) \in \C$, such that for any
two codewords $\phi_1,\phi_2 \in \cM,$ $$\phi_1^*E\phi_2=c(E)
\cdot \phi_1^* \phi_2.$$
\end{fact}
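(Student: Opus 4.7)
The plan is a direct one-line unfolding that reduces the orthogonality-preservation condition in the definition of ``correction'' to the hypothesised algebraic identity on $E = X^* Y$. Fix $X, Y \in \cE$, set $E = X^* Y$, and take any two codewords $\phi_1, \phi_2 \in \cM$ with $\phi_1^* \phi_2 = 0$. Using the standard adjoint identity $(X\phi_1)^*(Y\phi_2) = \phi_1^* X^* Y \phi_2$, I would rewrite
\[ (X\phi_1)^*(Y\phi_2) \;=\; \phi_1^* E \phi_2 \;=\; c(E)\cdot \phi_1^* \phi_2 \;=\; 0, \]
where the middle equality is the hypothesis and the last uses the assumed orthogonality. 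This is precisely the orthogonality-preservation condition defining correction.

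There is essentially no obstacle; the entire content of the Fact is the algebraic observation that a code whose ``off-diagonal'' error matrix elements $\phi_1^* E \phi_2$ behave like a scalar multiple of $\phi_1^* \phi_2$ on $\cM$ automatically maps orthogonal codewords to orthogonal errored states. This matches the informal picture sketched in the preceding subsection: the syndrome produced by $E$ should depend on $E$ alone and not on the codeword it was applied to, which is exactly what the scalar $c(E)$ encodes.

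For completeness one could also derive the converse (upgrading the statement to the standard Knill--Laflamme characterization), though it is not needed here. Given correction, $\phi_1^* E \phi_2 = 0$ for orthogonal codewords is immediate from the definition; constancy of $\phi^* E \phi$ across unit codewords is then obtained by applying the orthogonality-preservation condition to normalized superpositions $(\phi \pm \psi)/\sqrt{2}$ for orthonormal codewords $\phi, \psi$, which forces $\phi^* E \phi = \psi^* E \psi$. For the stated ``if'' direction, however, the short computation above suffices.
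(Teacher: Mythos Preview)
The paper does not prove this statement at all; it is stated as a \emph{Fact} with a citation to \cite[Section~15.5]{KSV02} and no accompanying argument. Your one-line verification of the stated ``if'' direction is correct, and there is nothing in the paper to compare it against.
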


A QECC $\cM$ corrects $t$ errors if it corrects all linear operators
that correlate the environment with at most $t$ qubits. There are
\emph{asymptotically good} QECCs, i.e., $[n,k]$ QECCs that correct
$t=\Omega(n)$ errors with $n=O(k)$~\cite{ALT01}.

\subsection{Boolean functions}
The influence of a variable $x_i$ on a boolean function $f:\B^n \to
\B$ is defined to be $$\Pr_{x \in \B^n} [f(x) \neq f(x \xor e_i)],$$
where $e_i \in \R^n$ is the $i$'th vector in the standard basis. The
influence of a function is the maximum influence of its variables.
Ben-Or and Linial~\cite{BL85} showed that there exists a balanced
function $\Tribes:\B^n \to \B$ with influence as small as $O({\log n
\over n})$, and Kahn, Kalai and Linial~\cite{KKL88} showed that this
bound is tight for balanced functions. We extend this notion to
complex valued functions. For $g:\B^n \to \C$ let
$$I_i(g)=\E_{x \in \B^n} |g(x)-g(x \xor e_i)|^2$$ and $I(g) =
\max_{i \in [n]} I_i(g)$.

We identify a function $g:\B^n \to \C$ with the vector $\sum_{x
\in \B^n} g(x) \ket{x}$. When we write $g$ we refer to it as a
vector in $\cN$. When we write $g(x)$ we refer to $g$ as a
function $g:\B^n \to \C$ and $g(x) \in \C$.

\section{No QECC can perfectly correct controlled bit flips}
\label{sec:no-perfect-QECC}

We now concentrate on the error model that allows any
\emph{controlled bit flip} error. Formally, for $i \in [n]$ and
$S\subseteq \B^{n-1}$ let $E_{i,S}$ be the operator that applies
$X$ on the $i$'th qubit conditioned on the other qubits being in
$S$. More precisely, we define the operator $E_{i,S}$ on the basis
$\set{\ket{x}| x \in \B^n}$ and extend it linearly. For $x \in
\B^n$ define $\wh{x}_i=(x_1, \ldots, x_{i-1},x_{i+1},\ldots,x_n)
\in \B^{n-1}$. Also, let $X^i \in L(\cN,\cN)$ denote the operator
that flips the $i$'th qubit, i.e., $X^i = I^{\tensor (i-1)} \tensor X
\tensor I^{\tensor (n-i)}$. Then
$$E_{i,S} \ket{x} = \left\{%
\begin{array}{ll}
    X^i \ket{x} & ~~~\hbox{if $\wh{x}_i \in S$} \\
    \ket{x} & ~~~\hbox{otherwise.} \\
\end{array}%
\right.$$ Let
$$\bitflip \eqdef \set{E_{i,S} ~~|~~ i \in [n],~ S \subseteq \B^{n-1} }.$$
We also define a tiny subset $\singeltons$ of $\bitflip$ by
$$\singeltons \eqdef \set{E_{i,\set{j}} ~~|~~ i \in [n],~ j \in \B^{n-1} }.$$
We claim that even this set of errors \emph{cannot} be corrected.

\begin{theorem}
\label{thm:exact-negative} There is no QECC with dimension bigger
than one that can correct $\singeltons$.
\end{theorem}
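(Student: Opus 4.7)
My plan is to derive a contradiction by assuming $\cM$ has dimension at least $2$. Pick orthonormal codewords $\phi_1, \phi_2 \in \cM$, and write $\alpha_x = \braket{x}{\phi_1}$ and $\beta_x = \braket{x}{\phi_2}$. By Fact~\ref{fact:QECC}, for every $X, Y \in \singeltons$ we have $\phi_1^* X^* Y \phi_2 = c(X^*Y)\,\phi_1^*\phi_2 = 0$.

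The key step is to apply this with $X = E_{i,\{j\}}$ and $Y = E_{i,\{j'\}}$ for distinct $j, j' \in \B^{n-1}$. A pointwise check on basis states shows $E_{i,\{j\}} E_{i,\{j'\}} = E_{i,\{j\}} + E_{i,\{j'\}} - I$, so using $\phi_1^* \phi_2 = 0$ we obtain
\[
\phi_1^* E_{i,\{j\}} \phi_2 \;=\; -\,\phi_1^* E_{i,\{j'\}} \phi_2 \qquad\text{for all } j \neq j'.
\]
Assuming $n \geq 3$ so that $|\B^{n-1}| \geq 4$, applying this identity to three distinct values of $j$ forces $\phi_1^* E_{i,\{j\}} \phi_2 = 0$ for every $i$ and $j$. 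Expanding in the computational basis then yields
\[
(\alpha_{z_0} - \alpha_{z_1})^* (\beta_{z_0} - \beta_{z_1}) = 0
\]
for every pair $z_0, z_1 \in \B^n$ that differ in a single coordinate.

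I would then apply the same argument also to the orthonormal pair $\phi_\pm = (\phi_1 \pm \phi_2)/\sqrt{2} \in \cM$. Writing $\delta^\phi = \alpha^\phi_{z_0} - \alpha^\phi_{z_1}$, the condition $(\delta^{\phi_+})^*\delta^{\phi_-} = 0$ expands to $|\delta^{\phi_1}|^2 = |\delta^{\phi_2}|^2$ at each edge. Combined with $(\delta^{\phi_1})^*\delta^{\phi_2} = 0$ from the previous step, this forces $\delta^{\phi_1} = \delta^{\phi_2} = 0$ on every edge, so both $\phi_1$ and $\phi_2$ have amplitudes constant across adjacent basis states. Since the Boolean hypercube is connected, both codewords must be proportional to the uniform superposition $2^{-n/2}\sum_x \ket{x}$, contradicting orthogonality.

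The main obstacle is the small case $n=2$, where $|\B^{n-1}|=2$ is too small for the triangle argument used above. There I would supplement the same-qubit constraints (which still yield $\phi_1^* X^i \phi_2 = 0$) with different-qubit pairs $X = E_{i,\{j\}}, Y = E_{k,\{l\}}$ with $i \neq k$, whose products $X^*Y$ act as $3$-cycles on the four computational basis states, and rule out any two-dimensional code by a direct finite calculation.
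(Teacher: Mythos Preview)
Your argument is correct for $n \ge 3$, and the sketch for $n=2$ is plausible, but the route differs from the paper's.  The paper applies Fact~\ref{fact:QECC} directly to a single operator $E=E_{i,\{\hat q_i\}}$: the diagonal condition $\phi^*E\phi=\psi^*E\psi$ reads $|\phi(q)-\phi(q')|^2=|\psi(q)-\psi(q')|^2$, and the off-diagonal condition $\psi^*E\phi=0$ reads $(\psi(q)-\psi(q'))^*(\phi(q)-\phi(q'))=0$; together these two force $\phi(q)=\phi(q')$ on every hypercube edge in one step, with no case split on $n$.  You instead work only with genuine products $E_{i,\{j\}}^{\,*}E_{i,\{j'\}}$, exploit the identity $E_{i,\{j\}}E_{i,\{j'\}}=E_{i,\{j\}}+E_{i,\{j'\}}-I$, and use a three-index cycling argument to deduce $\phi_1^*E_{i,\{j\}}\phi_2=0$; the rotated pair $\phi_\pm$ then recovers the diagonal information that the paper invokes directly.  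What your approach buys is scrupulousness about never treating $I$ as an element of $\singeltons$ (the paper's appeal to a bare $E\in\singeltons$ in Fact~\ref{fact:QECC} tacitly assumes the identity is an allowed error); the price is the three-point argument, the rotated-basis step, and the residual finite calculation at $n=2$.
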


\begin{proof}
Suppose there exists a $[n,k]$ code with $k \ge 1$ that corrects
$\singeltons$. Let $\phi = \sum_{i \in \B^n} \phi(i) \ket{i}$ and
$\psi = \sum_{i \in \B^n} \psi(i) \ket{i}$ be two orthonormal
codewords. We will prove that:

\begin{claim}
For every $i \in [n]$ and every $q \in \B^n$ it holds that $\phi(q)=\phi(q \xor e_i)$.
\end{claim}
In particular, it follows that $\phi=\alpha \cdot \sum_{i \in
\B^n} \ket{i}$ for some $0 \neq \alpha \in \C$. Similarly,
$\psi=\alpha' \cdot \sum_{i \in \B^n} \ket{i}$ for some $0 \neq
\alpha' \in \C$. Therefore, $\phi^*\psi  = 2^n \alpha^* \alpha'
\neq 0$. A contradiction.

We now prove the claim. Fix $i \in [n]$ and $q \in \B^{n}$. Denote
$E=E_{i,\set{q}}$ and $q'=q \xor e_i$. It can be verified that
\begin{eqnarray*}
\phi^* E \phi &=& \phi^* \phi - |\phi(q) - \phi(q')|^2 =1-|\phi(q) - \phi(q')|^2 \\
\psi^* E \psi &=& \psi^* \psi - |\psi(q) - \psi(q')|^2 =1-|\psi(q) - \psi(q')|^2 \\
\psi^* E \phi &=&
- (\phi(q) - \phi(q'))^*(\psi(q) - \psi(q'))
\end{eqnarray*}

As $\phi^* \psi =0$, by the QECC definition, $\psi^* E\phi = 0$
and so $(\phi(q) - \phi(q'))^*(\psi(q) - \psi(q'))=0$. If $\phi(q)
\neq \phi(q')$ we conclude that $\psi(q)=\psi(q')$. But then,
$\phi^* E\phi < 1$ while $\psi^* E \psi=1$, which contradicts
Fact~\ref{fact:QECC}. Therefore, $\phi(q)=\phi(q')$.
\end{proof}

The argument above shows that for any $w,w' \in \B^n$ and any
codeword $\phi$, $\phi(w)=\phi(w')$, by employing a sequence of
\emph{small} changes, and showing that $\phi$ is invariant under
these small changes. However, if we replace the stringent notion perfect decoding with the more relaxed
notion of \emph{approximate decoding}, then at least theoretically it is possible that under this weaker
notion, controlled bit flips can be corrected. Somewhat
surprisingly, this is indeed the case.

\section{An approximate QECC for controlled bit flips}

We first define a relaxed notion of error-detection. We say a code \emph{separates} $\cE$ if
for any two allowed errors $X,Y \in \cE$ and any two orthogonal
codewords $\phi,\psi$, $X\phi$ and $Y\psi$ are far away from each
other. Formally,
\begin{definition}
Let $\cM \subseteq \cN$ be an $[n,k]$ QECC and $\cE \subset
L(\cN,\cN')$. We say $\cM$ separates $\cE$ with at most $\alpha$
error, if for any two operators $X,Y \in \cE$ and any two unit
vectors $\phi_1, \phi_2 \in \cM$, if $\phi_1^* \phi_2 = 0$ then
$|\phi_1^*X^* Y\phi_2| \le \alpha$.
\end{definition}

We say a code $\cM$ \emph{approximately} corrects $\cE$ if there exists a
POVM on $\cN'$  such that for any operator $X \in \cE$, and any
codeword $\phi \in \cM$, when we apply the POVM on $X\phi$, the
resulting mixed state is close to the pure
state $\phi$. A very special case of the above is when the decoding
procedure is the \emph{identity} function. In this case we say $\cM$
is $(\cE,\eps$) \emph{immune}. Formally,

\begin{definition}
Let $\cM \subseteq \cN$ be an $[n,k]$ QECC and $\cE \subset
L(\cN,\cN')$. We say $\cM$ is $(\cE,\eps$) immune if for every $X
\in \cE$ and every $\phi \in \cM$, $|\phi^*X\phi| \ge
(1-\eps)|\phi^* \phi|$. We call $\eps$ the \emph{approximation
error}.
\end{definition}

We saw before that there is no non-trivial QECC that perfectly corrects $\bitflip$. In contrast, we will now construct a large QECC that is immune against $\bitflip$, with sub-constant approximation error.

\subsection{The construction}

The calculations done in Section~\ref{sec:no-perfect-QECC} can be
generalized to show that if we want $\phi$ to be $\eps$-immune for
bit flip errors, then $\phi(x)$ must have low influence. However,
we also want the QECC to have a large dimension, and so we want many
orthogonal such vectors. The idea is to work with a function $f$ of
low influence, and combine it on many independent blocks.

Pick an integer $B$ such that $2B$ divides $n$, and define $n'={n
\over 2B}$. Fix a balanced function $f:\B^{n'} \to \set{\pm \half}$
with low influence, i.e., $I(f) \le s=s(n')$. We remind the reader
that this means that for all $j$ in $[n']$, $I_j(f)=\E_{x \in
\B^{n'}} |f(x)-f(x \xor e_j)|^2 \le s$ (see
Section~\ref{sec:preliminiaries}). Notice that this implies that for
all $j \in [n']$,
\begin{eqnarray}
\label{eqn:pr}
\Pr_{w \in \B^{n'}} [f(w) \neq f(w \xor e_j)] & \le & s.
\end{eqnarray}
We use the low-influence function $f$ as a building block.

Now partition $[n]$ into $2B$ blocks of equal length $n'$. For $x
\in \B^n$, $i \in \set{1,\ldots,B}$ and $b \in \B$, let $x_{i,b} \in
\B^{n'}$ denote the value of $x$ restricted to the $(2i-1+b)$'th
block, i.e., the string  $x$ is the concatenation of the blocks
$x_{1,0},x_{1,1},x_{2,0},x_{2,1},x_{3,0},\ldots,x_{B,0},x_{B,1}$.
For $z=(z_1,\ldots,z_B) \in \B^B$ we define a function $f_z: \B^n \to \C$
that apply $f$ on the blocks corresponding to $z$. That is,
$$f_z(x)=f(x_{1,z_1}) \cdot \ldots \cdot f(x_{B,z_B}),$$
as shown in Figure~\ref{fig:ctrlX-QECC}.

\begin{figure}[ht]
    \centering
    \begin{minipage}[c]{.45\textwidth}
        \centering
       \includegraphics[scale=0.5]{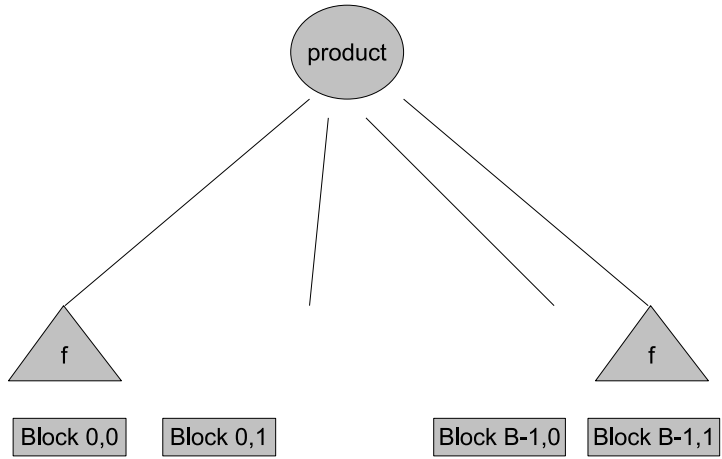}
    \end{minipage}
    \hfill
    \begin{minipage}[c]{.3\textwidth}
        \centering
        \caption{\it The input is $B$ pairs of blocks, each block is of length $n'$. The values $z_1,\ldots,z_{B} \in \B^B$ determine on which block in each pair $f$ is applied. In the example, $z_1=0$ while $z_{B}=1$. The output is the product of the $B$ values.}
         \label{fig:ctrlX-QECC}
    \end{minipage}
\end{figure}

As usual we look at $f_z$ as a vector in $\cN$. We let $W=\Span
\set{f_z ~:~ z \in \B^B}$. We claim:

\begin{theorem}
\label{thm:1-bit-flip} $W$ is an $[n,B]$ QECC that is
$(\bitflip,{2s(n')})$ immune.
\end{theorem}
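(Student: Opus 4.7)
The plan is to establish two things: that $\dim W = 2^B$, and that $W$ is $(\bitflip, 2s)$-immune, where $s = s(n')$.

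For the dimension, I would first check that the $2^B$ vectors $\{f_z\}_{z \in \B^B}$ are pairwise orthogonal. Since $f$ is balanced with values in $\{\pm\half\}$, we have $\sum_{y \in \B^{n'}} f(y) = 0$ and $\sum_{y} f(y)^2 = 2^{n'}/4$. If $z \ne z'$, they differ in some block $j$, and the inner product $\la f_z, f_{z'}\ra$ factors across the $2B$ sub-blocks; on the sub-block where only $f(x_{j,z_j})$ appears linearly, the factor $\sum_y f(y)$ vanishes. Since each $f_z$ has squared norm $(2^{n'}/4)^B \cdot 2^{Bn'} = 2^{n-2B} > 0$, the family is a basis, so $\dim W = 2^B$.

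For the immunity, fix a unit $\phi = \sum_z \alpha_z f_z \in W$ and $E = E_{i,S} \in \bitflip$, where the index $i$ sits in sub-block $(j,b)$ at internal position $i'$. Note that $E$ is Hermitian, so $\phi^*E\phi$ is real. Pairing up $x$ with $x \xor e_i$ (both are in $S$ simultaneously, since $\hat{x}_i = \widehat{(x \xor e_i)}_i$), a direct telescoping gives
\begin{equation*}
1 - \phi^*E\phi \;=\; \tfrac{1}{2}\!\!\sum_{x : \hat{x}_i \in S}\!\! |\phi(x) - \phi(x \xor e_i)|^2 \;\le\; \tfrac{1}{2}\sum_{x \in \B^n} |\phi(x) - \phi(x \xor e_i)|^2.
\end{equation*}
Now I would exploit the product structure. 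Decompose $\phi = \phi_0^{(j)} + \phi_1^{(j)}$ along $z_j = 0$ versus $z_j = 1$. Since $\phi_{1-b}^{(j)}$ is built from $f(x_{j,1-b})$ and thus does not depend on the qubits of sub-block $(j,b)$, the difference $\phi(x) - \phi(x \xor e_i)$ equals $\phi_b^{(j)}(x) - \phi_b^{(j)}(x \xor e_i)$. Writing $\phi_b^{(j)}(x) = f(x_{j,b}) \, h(x)$ with $h$ independent of sub-block $(j,b)$, the sum factors:
\begin{equation*}
\sum_x |\phi(x) - \phi(x \xor e_i)|^2 \;=\; \Bigl(\sum_{y \in \B^{n'}} |f(y) - f(y \xor e_{i'})|^2\Bigr) \cdot \sum_{x \text{ off block } (j,b)} |h(x)|^2.
\end{equation*}

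To finish, I would bound each factor. The first is $2^{n'} I_{i'}(f) \le 2^{n'} s$. For the second, using $\|\phi_b^{(j)}\|^2 = \sum_{x_{j,b}} f(x_{j,b})^2 \cdot \sum_{\text{off}}|h|^2 = (2^{n'}/4) \cdot \sum_{\text{off}} |h|^2$, we get $\sum_{\text{off}} |h|^2 = 4\|\phi_b^{(j)}\|^2 / 2^{n'}$. Multiplying, the $2^{n'}$ factors cancel and we obtain $\sum_x |\phi(x) - \phi(x \xor e_i)|^2 \le 4s\|\phi_b^{(j)}\|^2 \le 4s \|\phi\|^2 = 4s$. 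Combined with the earlier $\half$, this gives $|1 - \phi^*E\phi| \le 2s$, as required.

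The main subtlety is the block bookkeeping: recognizing that only the half of $\phi$ with $z_j = b$ sees the flip (since the other half is constant along $x_i$), and that the squared norm of the residual function $h$ is controlled by the ambient norm via the identity $f^2 \equiv 1/4$. Everything else is a routine orthogonality calculation that follows from the balance of $f$.
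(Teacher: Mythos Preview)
Your proposal is correct and follows essentially the same route as the paper. The paper packages your two steps as separate lemmas---one bounding $|\phi^*E_{i,S}\phi-\phi^*\phi|$ by $2^{n-1}I_i(\phi)$ via the same pairing identity you use, and one bounding $2^{n-1}I_i(\phi)$ by $2s\,|\phi^*\phi|$ via the same block decomposition (split on $z_j=b$ versus $z_j\neq b$, factor out $f(x_{j,b})$, use $f^2\equiv 1/4$ and orthogonality of the $f_z$)---but the content and the arithmetic are identical to yours.
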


In particular, taking $f(w)=\half$ when $\Tribes(w)=1$ and
$f(w)=-\half$ when $\Tribes(w)=0$, we get:

\begin{theorem}
\label{thm:app-positive}
For every $0 < B=B(n) < n$ there exists
an $[n,B]$ QECC that is $(\bitflip,\eps=O({B \log n \over n}))$ immune.
\end{theorem}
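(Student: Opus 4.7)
The plan is to derive this as an immediate corollary of Theorem~\ref{thm:1-bit-flip}: all we need is a single balanced base function $f : \B^{n'} \to \set{\pm \half}$ on $n' = n/(2B)$ coordinates with influence $s(n') = O(\log n' / n')$. Then Theorem~\ref{thm:1-bit-flip} packages it into an $[n,B]$ QECC immune to $\bitflip$ with approximation error $2s(n')$, and the only remaining task is to simplify $2 s(n/(2B))$ into the advertised $O(B \log n / n)$.

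For the choice of $f$, the paper already points to the Ben-Or--Linial Tribes function. Concretely, set $f(w) = \half$ when $\Tribes(w) = 1$ and $f(w) = -\half$ otherwise. Since the two values of $f$ are at distance exactly $1$, for each coordinate $j \in [n']$ we have
$$I_j(f) \;=\; \E_w |f(w) - f(w \xor e_j)|^2 \;=\; \Pr_w[\Tribes(w) \neq \Tribes(w \xor e_j)],$$
and by~\cite{BL85} each such probability is $O(\log n'/n')$. Thus $I(f) \le s(n') = O(\log n'/n')$.

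Plugging $n' = n/(2B)$ into the immunity bound of Theorem~\ref{thm:1-bit-flip} gives approximation error
$$2 s(n') \;=\; O\!\left(\frac{\log n'}{n'}\right) \;=\; O\!\left(\frac{B \log(n/(2B))}{n}\right) \;=\; O\!\left(\frac{B \log n}{n}\right),$$
as claimed. There is no real obstacle here: all the heavy lifting lives in Theorem~\ref{thm:1-bit-flip}. The only minor points are (a) that the $\pm \half$ shift is irrelevant to the influence calculation, which only sees the gap between the two values; and (b) that one needs $2B \mid n$, which is the standard divisibility annoyance easily absorbed into the constant by rounding $n'$ down to the nearest integer for which a Tribes function exists.
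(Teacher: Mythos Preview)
Your proposal is correct and matches the paper's approach exactly: the paper also derives Theorem~\ref{thm:app-positive} as an immediate corollary of Theorem~\ref{thm:1-bit-flip} by taking $f(w)=\half$ when $\Tribes(w)=1$ and $f(w)=-\half$ otherwise, relying on the Ben-Or--Linial influence bound. Your added remarks about the $\pm\half$ shift and the divisibility of $n$ by $2B$ are reasonable clarifications that the paper leaves implicit.
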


In particular, there exists QECCs of length $n$ and dimension
$2^{\sqrt{n}}$ that approximately corrects all controlled-X errors with
an $O({\log(n) \over \sqrt{n}})$ approximation error.

\subsection{The analysis}

We first show that $\dim W = B$. This immediately follows from:

\begin{claim}
$\set{f_z}_{z \in \B^B}$ is an orthogonal set.
\end{claim}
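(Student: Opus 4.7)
The plan is to compute the inner product $f_z^{*} f_{z'} = \sum_{x \in \B^n} \overline{f_z(x)} f_{z'}(x)$ directly and show it vanishes whenever $z \neq z'$. Since $f$ is real-valued (taking values in $\{\pm \half\}$), the complex conjugate disappears and we just need to evaluate $\sum_x f_z(x) f_{z'}(x)$.

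The key observation is that the definition of $f_z$ has a product structure across the $B$ pairs of blocks, and those pairs of blocks involve disjoint coordinates of $x$. So, writing $x$ as the concatenation of the $B$ block-pairs $(x_{i,0}, x_{i,1})$ for $i \in \{1,\ldots,B\}$, the sum factors as
\[
\sum_{x \in \B^n} f_z(x) f_{z'}(x) \;=\; \prod_{i=1}^{B} \; \sum_{x_{i,0}, x_{i,1} \in \B^{n'}} f(x_{i,z_i}) \, f(x_{i,z'_i}).
\]

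Now I would analyze each factor according to whether $z_i = z'_i$ or not. If $z_i = z'_i$, the summand is $f(x_{i,z_i})^2 = 1/4$ (independent of the other block in the pair), so the inner sum is a positive number $2^{2n'}/4$. If $z_i \neq z'_i$, the two factors depend on the two disjoint blocks $x_{i,0}$ and $x_{i,1}$, so the sum splits as a product $\bigl(\sum_{w \in \B^{n'}} f(w)\bigr)^2$, which equals $0$ because $f$ is balanced (half its values are $+\half$ and half are $-\half$).

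Since $z \neq z'$, there exists at least one index $i$ with $z_i \neq z'_i$, so at least one factor in the product vanishes and hence $f_z^{*} f_{z'} = 0$. There is no real obstacle here — the proof is a direct calculation that uses only the product structure of $f_z$ across disjoint blocks together with the balancedness of $f$; the only thing to keep straight is the block indexing.
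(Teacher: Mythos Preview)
Your proof is correct and takes essentially the same approach as the paper: both exploit that $f$ is balanced and that there is an index $k$ with $z_k \neq z'_k$, so the contribution from that block pair factors off and vanishes. The only difference is presentational---you factor the sum directly over the $B$ block pairs, while the paper phrases the same computation probabilistically by arguing that $(f_z(x),f_{z'}(x))$ is uniform on $\{\pm 2^{-B}\}^2$.
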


\begin{proof}
We will
show that for $z \neq z'$, $f_z^* f_{z'}=0$. For that, it is enough
to show that $\set{(f_z(x),f_{z'}(x))}_{x \in \B^n}$ is uniform over
$\set{\pm 2^{-B}} \times \set{\pm 2^{-B}}$.

To see that, first notice that $f(x_{1,i_1})$ is balanced over
$\set{\pm \half}$. Hence, $\set{f_z(x)}_{x \in \B^n}$ is uniform
over $\set{\pm 2^{-B}}$. Also, as $z \neq z'$ there exists some $k$
such that $z_k \neq z'_k$. Notice that $f(x_{k,z_k})$ depends on
bits that do not influence $f_{z'}(x)$, hence it is independent of
$f_{z'}(x)$. It is also uniform on $\set{ \pm \half}$. Hence the
pair $(f_{z}(x),f_{z'}(x))$ is uniform over $(\pm 2^{-B},\pm
2^{-B})$ as desired.
\end{proof}

We now analyze the approximation error. We will use the following
lemmas:
\begin{lemma}
\label{lem:sensitivity} For every $\phi \in \cN$ and every $i \in
[n]$, $S \subseteq \B^{n-1}$,
$$|\phi^* E_{i,S} \phi -\phi^*\phi| \le 2^{n-1} I_{i}(\phi).$$
\end{lemma}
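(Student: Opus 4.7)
The plan is to unpack both sides of the inequality directly from the definitions and show that the left-hand side is literally a sub-sum of the right-hand side. Writing $\phi = \sum_{x \in \B^n} \phi(x) \ket{x}$, I would first compute $E_{i,S}\phi$ by splitting the basis expansion according to whether $\wh{x}_i \in S$: on those basis vectors the operator swaps $\ket{x}$ with $\ket{x \xor e_i}$, and on the rest it acts as the identity. Taking the inner product with $\phi$ and subtracting $\phi^*\phi$, the ``identity part'' cancels and leaves only a contribution from pairs $(x, x \xor e_i)$ with $\wh{x}_i \in S$.

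Next, I would pair up the surviving terms. For each $y \in S$ there are exactly two $x \in \B^n$ with $\wh{x}_i = y$, namely the string with $x_i = 0$ and the string with $x_i = 1$; call the corresponding coefficients $a_y$ and $b_y$. A short calculation shows that the two terms that these contribute to $\phi^*\phi - \phi^*E_{i,S}\phi$ combine into $(a_y - b_y)^*(a_y - b_y) = |a_y - b_y|^2$, so
\begin{equation*}
\phi^*\phi - \phi^* E_{i,S}\phi \;=\; \sum_{y \in S} |a_y - b_y|^2 \;\ge\; 0.
\end{equation*}
In particular the quantity is real and non-negative, so its absolute value is the sum itself.

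Finally, I would relate this to $I_i(\phi)$. By definition $I_i(\phi) = 2^{-n} \sum_{x \in \B^n} |\phi(x) - \phi(x \xor e_i)|^2$, and each unordered pair $\{x, x \xor e_i\}$ is counted twice in that sum, so $I_i(\phi) = 2^{1-n} \sum_{y \in \B^{n-1}} |a_y - b_y|^2$, i.e., $\sum_{y \in \B^{n-1}} |a_y - b_y|^2 = 2^{n-1} I_i(\phi)$. Bounding the sum over $S$ by the full sum over $\B^{n-1}$ yields $|\phi^* E_{i,S}\phi - \phi^*\phi| \le 2^{n-1} I_i(\phi)$, as claimed. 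The argument is essentially bookkeeping, so there is no real obstacle; the one place to be careful is the pairing step, since it is what forces the factor $2^{n-1}$ (rather than $2^n$) to appear on the right-hand side.
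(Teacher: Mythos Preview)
Your argument is correct and follows the same pairing idea as the paper: expand $\phi^* E_{i,S}\phi - \phi^*\phi$, group the two basis strings over each $y \in S$, and recognize the result as a sub-sum of $2^{n-1} I_i(\phi)$. The only difference is that the paper phrases the computation for two vectors $g,h$ and bounds $\bigl|\sum_{y\in S}(h(y,0)-h(y,1))^*(g(y,0)-g(y,1))\bigr|$ via Cauchy--Schwarz by $\sqrt{2^{n-1}I_i(h)}\sqrt{2^{n-1}I_i(g)}$; in the diagonal case $g=h=\phi$ this collapses to exactly your observation that the sum is already a non-negative sum of squares, so your direct route is a slight simplification while the paper's version gives the (mildly more general) off-diagonal bound for free.
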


\begin{lemma}
\label{lem:W} For every $\phi \in W$, $$2^{n-1} I(\phi) \le {2s}
\cdot |\phi^*\phi|.$$
\end{lemma}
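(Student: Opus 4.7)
The plan is to decompose $\phi \in W$ in the orthogonal basis $\set{f_z}_{z \in \B^B}$ and exploit the fact that flipping a single bit affects only \emph{half} of these basis vectors. First I would write $\phi = \sum_{z \in \B^B} \alpha_z f_z$. Fix an $i \in [n]$; this bit lies in the $(2j{-}1{+}b)$-th block for some $j \in \set{1,\ldots,B}$ and $b \in \B$, and I denote by $i' \in [n']$ its position within that block. The key structural observation is that $f_z$ depends on the block $(j,b)$ precisely when $z_j = b$. Splitting $\phi = \phi_0 + \phi_1$ with $\phi_1 \eqdef \sum_{z : z_j = b} \alpha_z f_z$, the summands in $\phi_0$ are invariant under flipping bit $i$, so $\phi(x) - \phi(x \xor e_i) = \phi_1(x) - \phi_1(x \xor e_i)$. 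Moreover, $\phi_1$ factors cleanly as $\phi_1(x) = f(x_{j,b}) \cdot G(x')$, where $x' \in \B^{n-n'}$ collects all blocks other than $(j,b)$ and $G(x') \eqdef \sum_{z : z_j = b} \alpha_z \prod_{k \ne j} f(x_{k,z_k})$.

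Next I would use the low influence of $f$ together with the independence of $x_{j,b}$ and $x'$ to obtain
\[
I_i(\phi) \;=\; \E_{x_{j,b}} |f(x_{j,b}) - f(x_{j,b} \xor e_{i'})|^2 \cdot \E_{x'} |G(x')|^2 \;=\; I_{i'}(f) \cdot \E_{x'}|G(x')|^2 \;\le\; s \cdot \E_{x'}|G(x')|^2.
\]
It then remains to relate $\E_{x'}|G(x')|^2$ to $\phi^*\phi$. Since $|f|^2 \equiv 1/4$ pointwise, $|\phi_1(x)|^2 = \tfrac{1}{4}|G(x')|^2$; averaging over $x$ and using $|\B^{n-n'}| = 2^{n-n'}$ gives $\E_{x'}|G(x')|^2 = 4 \cdot 2^{-n} \cdot \norm{\phi_1}^2$. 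Orthogonality of $\set{f_z}$ makes $\phi_1$ an orthogonal projection of $\phi$, so $\norm{\phi_1}^2 \le \phi^*\phi$. Combining these estimates yields $2^{n-1} I_i(\phi) \le 2s \cdot \phi^*\phi$, and maximizing over $i$ completes the proof.

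The main subtlety I anticipate lies in the initial decomposition: one must verify that the factor $G(x')$ in the block-factorization of $\phi_1$ really is independent of the block $x_{j,b}$ (this uses that $\prod_{k \ne j} f(x_{k, z_k})$ does not read block $(j,b)$ at all), and that only the terms with $z_j = b$ survive the discrete derivative in direction $e_i$. Once that structural fact is in place, the remaining calculation is routine: the factor $2^{n-1}$ on the left of Lemma~\ref{lem:W} exactly absorbs the $4 \cdot 2^{-n}$ arising from $|f|^2 = 1/4$ and the size of the domain of $x'$, leaving the desired constant $2s$.
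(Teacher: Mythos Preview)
Your proposal is correct and follows essentially the same approach as the paper: both decompose $\phi$ in the basis $\set{f_z}$, isolate the part $\phi_1=\zeta$ supported on those $z$ with $z_j=b$, factor it as $f(x_{j,b})\cdot G(x')$, use $I_{i'}(f)\le s$ together with $|f|^2\equiv 1/4$, and finish via $\norm{\phi_1}^2\le\phi^*\phi$ from orthogonality of the $f_z$. Your presentation via expectations is slightly cleaner than the paper's summation bookkeeping, but the argument is the same.
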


These lemmas together imply Theorem~\ref{thm:1-bit-flip}. Notice
that in Lemma~\ref{lem:W} we had to prove the claim for every $\phi
\in W$ and not just for some basis of $W$ (see the discussion in the
introduction).

\begin{proof}[ of Lemma~\ref{lem:sensitivity}]
For every $g,h:\B^n \to \C$ and every $i \in [n]$, $S \subseteq \B^{n-1}$,
\begin{eqnarray*}
{h}^* E_{i,S} {g}  & = &
\sum_{x:\wh{x}_i \not \in S} h(x)^*g(x)+
\sum_{x:\wh{x}_i  \in S} h(x)^*g(x \xor e_i) \\
& = & \sum_{x \in \B^n} h(x)^*g(x)+\sum_{x: \wh{x}_i \in S}
[h(x)^*g(x \xor e_i)-h(x)^*g(x)].
\end{eqnarray*}

Now, fix $i$. For $y \in \B^{n-1}$ and $b \in \B$ let $(y,b)$ denote
the string $x \in \B^n$ such that $\wh{x}_i=y$ and $x_i=b$. Then,
\begin{eqnarray*}
|{h}^* E_{i,S} {g}  -h^*g|
& = & \left|\sum_{y \in S} (h(y,0)^*-h(y,1)^*)(g(y,0)-g(y,1))\right| \\
& \le &
\sqrt{\sum_{y \in S}|h(y,0)^*-h(y,1)^*|^2} \sqrt{\sum_{y \in S}|g(y,0)-g(y,1)|^2} \\
& \le &
\sqrt{\sum_{y \in \B^{n-1}}|h(y,0)^*-h(y,1)^*|^2} \sqrt{\sum_{y \in \B^{n-1}}|g(y,0)-g(y,1)|^2} \\
& = & \sqrt{2^{n-1} I_i(h)}\sqrt{2^{n-1} I_i(g)}
\end{eqnarray*}
\end{proof}

We now turn to Lemma~\ref{lem:W}. One can check that all elements in
$\set{f_z}$ have low influence. However, this by itself does not
imply that all elements in $W$ are so. So we verify this directly.

\begin{proof}[ of Lemma~\ref{lem:W}]
We want to show that any $\phi \in W$ has low influence. Fix $i \in
[n]$ and suppose that $i$ corresponds the $j$'th variable in the
$(k,b)$'th block. For $x \in \B^n$ let $x=(x^{(1)},x^{(2)})$ where
$x^{(2)}=x_{k,b}$ and $x^{(1)} \in \B^{n-n'}$ is the rest of $x$. Let
$\wh{f_z}:\B^{n} \to \C$ be $$\wh{f_z}(x)=f(x_{1,z_1}) \cdot \ldots
f(x_{k-1,z_{k-1}}) \cdot f(x_{k+1,z_{k+1}}) \cdot \ldots
f(x_{B,z_B}).$$ Notice that $\wh{f_z}(x^{(1)},x^{(2)})$ depends only on
$x^{(1)}$. For that reason we also write it as $\wh{f_z}(x^{(1)})$.

We are given $\phi \in W$ and express it as $\phi=\sum_{z}
\alpha_{z} f_{z}$. We want to bound
\begin{eqnarray*}
I_i(\phi) &=& \E_{x \in \B^{n}}|\phi(x)-\phi(x \xor e_i)|^2 \\
& = & \E_{x \in \B^{n}} \left|\sum_z \alpha_z (f_z(x)-f_z(x \xor
e_i))\right|^2.
\end{eqnarray*}

The functions $f_z$ for which $f_z(x)=f_z(x \xor e_i)$ do not
contribute to the sum. We can therefore define $\zeta=\sum_{z: z_k=b}
\alpha_{z} f_{z}$ and it follows that $I_i(\phi)=I_i(\zeta)$. Then,
\begin{eqnarray*}
2^{n-1} I_i(\zeta) &=& \half \sum_{x \in \B^{n}} \left|\sum_{z:z_k=b} \alpha_z (f_z(x)-f_z(x \xor e_i))\right|^2\\
& = & \half \sum_{(x^{(1)},x^{(2)})\in \B^{n}} \left|\sum_{z:z_k=b} \alpha_z \wh{f_z}(x^{(1)})(f(x^{(2)})-f(x^{(2)} \xor e_j))\right|^2\\
& = & \half \sum_{(x^{(1)},x^{(2)})\in \B^{n}} |f(x^{(2)})-f(x^{(2)} \xor e_j)|^2
\cdot \left|\sum_{z:z_k=b} \alpha_z \wh{f_z}(x^{(1)})\right|^2.
\end{eqnarray*}

Next, observe that the only terms that contribute non-zero values
are those $x=(x^{(1)},x^{(2)})$ where $f(x^{(2)}) \neq f(x^{(2)} \xor e_j)$. There
are at most $s2^{n'}$ such strings $x^{(2)}$ (see
Equation~(\ref{eqn:pr})). Also, each such term contributes
$|\sum_{z:z_k=b} \alpha_z \wh{f_z}(x^{(1)})|^2$. However,
\begin{eqnarray*}
|\zeta(x^{(1)},x^{(2)})|^2 &=& \Big|\sum_{z: z_k=b} \alpha_{z} f_{z}(x^{(1)},x^{(2)})\Big|^2
~=~\Big|\sum_{z: z_k=b} \alpha_{z} \wh{f_z}(x^{(1)})f(x^{(2)})\Big|^2 \\
&=& \big|f(x^{(2)}) \big|^2 \cdot \Big|\sum_{z: z_k=b} \alpha_{z} \wh{f_z}(x^{(1)})\Big|^2
~=~ {1 \over 4}\Big|\sum_{z:z_k=b} \alpha_z \wh{f_z}(x^{(1)})\Big|^2.
\end{eqnarray*}

Thus, the term $|\zeta(x^{(1)},x^{(2)})|^2$ depends only on $x^{(1)}$ and not on
$x^{(2)}$, and we denote it by $|\zeta(x^{(1)})|^2$. Denote
$\wh{\zeta}=\sum_{z: z_k=b} \alpha_{z} \wh{f_z}$. Notice that
$\sum_{x^{(1)}} |\zeta(x^{(1)})|^2=|{\wh{\zeta}~ }^* \wh{\zeta}|$. Also, because
$\zeta(x^{(1)},x^{(2)})$ does not depend on $x^{(2)}$, $|\zeta^* \zeta| = 2^{n'}
|{\wh{\zeta} ~}^* \wh{\zeta}|$. Altogether,
\begin{eqnarray*}
2^{n-1} I_i(\zeta) &=&  \half \sum_{x_1} s2^{n'} 4|\zeta(x_1)|^2
={2s}\cdot 2^{n'} |{\wh{\zeta}~}^* \wh{\zeta}| = {2s}|\zeta^*\zeta|.
\end{eqnarray*}

Finally, $\zeta=\sum_{z:z_k=b} \alpha_z f_z$ is a linear combination
of orthogonal functions $\set{f_z}$ and $|f_z^*f_z|=2^{n-2B}$. Thus,
$$|\zeta^*\zeta|=\sum_{z:z_k=b} |\alpha_z|^2 2^{n-2B} \le 2^{n-2B}
\sum_z |\alpha_z|^2=|\phi^*\phi|,$$ which completes the proof.
\end{proof}

\section{No approximate QECC can correct controlled phase errors}

So far we have seen that one can approximately correct controlled-X
errors with a sub-constant approximation error. We now show there is
no way to correct controlled-phase errors. The reason is that if
$\phi$ and $\psi$ are two orthogonal codewords, then by applying
controlled phase errors we can match the phase of $\phi$ and $\psi$
on any basis vector $\ket{x}$, and this implies  that
$|\phi^*X\psi|$ is about $\sum_x |\phi(x)| \cdot |\psi(x)|$ which
leads to a simple contradiction.

We now formally define our error model. As before, the error
operators are linear and hence it is sufficient to define them on
the standard basis $\set{\ket{x}}_{x \in \B^n}$. We define the error
operators $E_{S,\theta}$, for $S \subseteq \B^n$ and $\theta \in
[2\pi]$ by:
$$E_{S,\theta} \ket{x} = \left\{%
\begin{array}{ll}
    e^{\theta i} \ket{x} & ~~~\hbox{if $x \in S$} \\
    \ket{x} & ~~~\hbox{otherwise.} \\
\end{array}%
\right.$$

In fact, we do not even need to allow any controlled phase error,
and we can be satisfied with $\theta \in \set{0,{\pi \over 4},{\pi
\over 2}}$. Set,
$$\phaseflip = \set{E_{S,\theta} ~|~ S \subseteq \B^n, \theta \in  \set{0,{\pi \over 4},{\pi
\over 2}}}.$$

We now prove that such errors cannot be approximately corrected,
even for some fixed constant error. In fact, we prove that such
errors cannot even be \emph{separated}.

\begin{theorem}
\label{thm:app-negative} There is no two-dimensional QECC that
separates $\phaseflip$ with at most $\alpha={1 \over 10}$ error.
\end{theorem}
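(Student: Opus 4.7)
The plan is to argue by contradiction. Suppose some two-dimensional $\cM$ separates $\phaseflip$ with error $\alpha \le 1/10$ and fix any orthonormal basis $\phi,\psi$ of $\cM$. The goal is to produce errors $X,Y\in\phaseflip$ with $|\phi^*X^*Y\psi|>\alpha$. The intuition sketched at the start of this section is to use the diagonal operator $X^*Y$ to align the phases of the amplitudes $h(x):=\phi(x)^*\psi(x)$, making $|\phi^*X^*Y\psi|$ approximate $\sum_x|\phi(x)||\psi(x)|$. The catch is that this sum can vanish for poor choices of basis (e.g.\ $\phi=\ket 0$, $\psi=\ket 1$), so I will use the freedom to choose the orthonormal basis inside $\cM$ and combine the separation inequality applied to two different bases.

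For the phase-alignment step, I take $X=I$ (the $\theta=0$ element of $\phaseflip$) and $Y=E_{S,\pi/2}$. Orthogonality gives $\sum_x h(x)=0$, and a direct expansion yields $\phi^*X^*Y\psi = (i-1)\sum_{x\in S}h(x)$, so $|\phi^*X^*Y\psi|=\sqrt{2}\,|\sum_{x\in S}h(x)|$. Choosing $S=\{x:\operatorname{Re}h(x)\ge 0\}$ and using that the positive and negative parts of $\operatorname{Re}h$ have equal total mass (since $\sum_x\operatorname{Re}h(x)=0$), the real part of $\sum_{x\in S}h(x)$ is exactly $\tfrac12\|\operatorname{Re}h\|_1$, so $|\phi^*X^*Y\psi|\ge\|\operatorname{Re}h\|_1/\sqrt{2}$; the symmetric choice $S=\{x:\operatorname{Im}h(x)\ge 0\}$ gives the analogous bound on $\|\operatorname{Im}h\|_1$. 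Combined with $|h(x)|\le|\operatorname{Re}h(x)|+|\operatorname{Im}h(x)|$, the separation hypothesis forces $\|h\|_1\le 2\sqrt{2}\,\alpha$.

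To close the argument, I apply the same bound to the Hadamard-rotated orthonormal basis $\phi'=(\phi+\psi)/\sqrt{2}$, $\psi'=(\phi-\psi)/\sqrt{2}$ of $\cM$, obtaining $\|h'\|_1\le 2\sqrt{2}\,\alpha$ with $h'(x):=\phi'(x)^*\psi'(x)$. A direct expansion gives $h'(x)=\tfrac12(|\phi(x)|^2-|\psi(x)|^2) - i\operatorname{Im}h(x)$, so $|h'(x)|\ge\tfrac12\big||\phi(x)|^2-|\psi(x)|^2\big|$. Writing $a(x)=|\phi(x)|^2$ and $b(x)=|\psi(x)|^2$, an elementary inequality $\sqrt{ab}+\tfrac12|a-b|\ge\tfrac12(a+b)$ (reducing, for $a\ge b$, to $\sqrt{ab}\ge b$) gives the pointwise bound $|h(x)|+|h'(x)|\ge(a(x)+b(x))/2$. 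Summing with $\sum_x a(x)=\sum_x b(x)=1$ yields $\|h\|_1+\|h'\|_1\ge 1$, contradicting $\|h\|_1+\|h'\|_1 \le 4\sqrt{2}\,\alpha < 1$. The main obstacle is recognizing that a single orthonormal basis is not enough and that the Hadamard-rotated basis supplies exactly the complementary information: the pointwise AM--GM-type inequality $|h|+|h'|\ge(a+b)/2$ is what forces the two upper bounds from the two bases to sum below the combined lower bound.
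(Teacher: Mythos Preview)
Your proof is correct and shares the same two-step architecture as the paper: (i) use controlled phases to upper-bound $\|h\|_1:=\sum_x|\phi(x)||\psi(x)|$ in terms of $\alpha$, and (ii) use the Hadamard-rotated basis to get the complementary lower bound $\|h\|_1+\|h'\|_1\ge 1$. Step~(ii) is literally the same inequality as the paper's Lemma~\ref{lem:abs}, just phrased pointwise via $\sqrt{ab}+\tfrac12|a-b|\ge\tfrac12(a+b)$ instead of $|(a+b)(a-b)|\ge(|a|-|b|)^2$.

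The genuine difference is in step~(i). The paper combines $X^*Y$ over $\theta\in\{0,\pi/4,\pi/2\}$ to realize all of $\phaseflip_2$, aligning phases up to angle $\pi/4$ and obtaining $\alpha\ge(1-\pi/4)\,\|h\|_1$ directly (Lemma~\ref{lem:phase} with $k=2$). You instead use only $X=I$ and $Y=E_{S,\pi/2}$, exploit $\sum_x h(x)=0$ to get $|\phi^*X^*Y\psi|=\sqrt{2}\,|\sum_{x\in S}h(x)|$, and pick $S$ by the sign of $\operatorname{Re}h$ (resp.\ $\operatorname{Im}h$) to bound $\|\operatorname{Re}h\|_1$ and $\|\operatorname{Im}h\|_1$ separately, giving $\|h\|_1\le 2\sqrt{2}\,\alpha$. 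This is more elementary---it never uses the $\pi/4$ shift at all---and in fact yields the sharper threshold $\alpha\ge 1/(4\sqrt{2})\approx 0.177$, versus the paper's $(1-\pi/4)/2\approx 0.107$. The paper's Lemma~\ref{lem:phase} is a cleaner general statement (approximate phase alignment with $2^k$ equally spaced angles), while your argument is tailored to this specific setting but shows the $\pi/4$ element of $\phaseflip$ is not actually needed.
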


\begin{proof}
Assume $\cM$ be a 2-dimensional QECC that separates $\phaseflip$
with at most $\alpha$ error.

\begin{lemma}\label{lem:abs}
Let $\cM \subseteq \cN$ be a vector space of dimension greater than
one. Then there are two orthonormal vectors $\phi,\psi \in \cM$ such
that $\sum_x |\phi(x)| \cdot |\psi(x)| \ge \half$.
\end{lemma}

We postpone the proof for later. Fix $\phi$ and $\psi$ as in the
lemma. Notice that by ranging over all $X,Y \in \phaseflip$, we can
implement any operator $E$ that partitions $\B^n$ to four sets, and
based on the set does a phase shift of angle $0$, ${\pi \over 4}$,
${\pi \over 2}$ or ${3\pi \over 4}$. More precisely: for a partition
$\bS = (S_1,\ldots,S_\ell)$ of $\B^n$ and for a tuple of angles
$\Theta=(\theta_1,\ldots,\theta_\ell) \subseteq [2\pi]^\ell$ define
$E_{\bS,\Theta}$ by $E_{\bS,\Theta} \ket{x}= e^{\theta_j i}
\ket{x}$, where $j$ is such that $x \in S_j$. Let $\Theta_k=\left(
0, {1 \over 2^{k}} \pi,\ldots, {(2^{k}-1) \over 2^{k}} \pi\right)$
and
$$\phaseflip_k = \set{E_{\bS,\Theta_k} ~|~ \bS=(S_1,\ldots,S_{2^k}) \mbox{ is a partition of $\B^n$}}.$$

Then, by ranging over all $X,Y \in \phaseflip$ we range over all $E
\in \phaseflip_2$. We claim:

\begin{lemma}\label{lem:phase}
Let $k \ge 1$ and $\eps=2^{-k}\pi$. For every $\phi,\psi \in \cN$
there exists $E \in \phaseflip_k$ such that $$|\phi^* E \psi| \ge
(1-\eps)\sum_x |\phi(x)| \cdot |\psi(x)|.$$
\end{lemma}

Thus, in particular for the $\phi$ and $\psi$ we fixed before (and
setting $k=2, \eps={\pi \over 4}$):
$$\alpha \ge |\phi^* X \psi| \ge {1-\eps \over 2} >
\frac{1}{10}.$$
\end{proof}

We are left to prove the two lemmas:

\begin{proof}[ of Lemma~\ref{lem:abs}]
Let $\phi,\psi \in \cM$ be arbitrary orthonormal vectors. Let
$\phi'={1 \over \sqrt{2}}(\phi+\psi)$ and $\psi'={1 \over
\sqrt{2}}(\phi-\psi)$. Then
\begin{eqnarray*}
\sum_x |\phi'(x)| \cdot |\psi'(x)| &=& \half \sum_x
|\phi(x)+\psi(x)| \cdot |\phi(x)-\psi(x)|.
\end{eqnarray*}

Fix some $x \in \B^n$. Denote $a=\phi(x), b=\psi(x)$, $a,b \in \C$
and assume $|a| \ge |b|$. Then,
\begin{eqnarray*}
|(a+b)(a-b)| &=& |a^2-b^2| \ge |a|^2-|b|^2 = (|a|-|b|) (|a|+|b|) \\
& \ge & (|a|-|b|)^2 = |a|^2+|b|^2-2 |a| \cdot |b|.
\end{eqnarray*}

Therefore,
\begin{eqnarray*}
\sum_x |\phi'(x)| \cdot |\psi'(x)| &\ge& \half \sum_x
\left(|\phi(x)|^2+|\psi(x)|^2\right) -\sum_x |\phi(x)| \cdot
|\psi(x)| = 1-\sum_x |\phi(x)| \cdot |\psi(x)|.
\end{eqnarray*}

Thus, either $\sum_x |\phi(x)| \cdot |\psi(x)|$ or $\sum_x
|\phi'(x)| \cdot |\psi'(x)|$ is at least half.
\end{proof}

\begin{proof}[ of Lemma~\ref{lem:phase}]
Express $\phi(x)=r_x \cdot e^{\theta_x i}$ with $r_x=|\phi(x)| \in
R^{+}$ and $\theta_x \in [2\pi]$. Similarly, $\psi(x)=r'_x \cdot
e^{\theta'_x i}$. The partition $\bS=(S_1,\ldots,S_{2^k})$ is
defined as follows. For every $x$ we look at $\min_j \set{|\theta'_x
+ \theta_j - \theta_x |} $. Any $x$ is chosen to be in $S_{j'}$
according to the $j'$ that minimizes the above expression for that
$x$. Note that the above expression is always bounded by
$2^{-k}\pi$.

Denote $E=E_{\bS,\Theta_k}$. Then, $$|\phi^*E\psi|=  \left|\sum_x
r_x r'_x e^{\zeta_x i}\right|,$$ where $|\zeta_x| \le 2^{-k}\pi$.
Let $u_x=1-e^{\zeta_x i}$ and notice that
$$|u_x|^2=2(1-\cos(\zeta_x)) \le \zeta_x^2 \le 2^{-2k}\pi^2$$ and
$|u_x| \le 2^{-k}\pi=\eps$. Thus,
\begin{eqnarray*}
|\phi^*E\psi| &=& \left|\sum_x r_x r'_x (1-u_x)\right| ~\ge~ \sum_x r_x r'_x -\left|\sum_x r_x r'_x u_x \right| \\
&\ge & (1-\max_x |u_x|)\sum_x r_x r'_x ~\ge~ (1-\eps)\sum_x r_x r'_x,
\end{eqnarray*}
as desired.
%
\end{proof}


\bibliographystyle{alpha}
\bibliography{refs}
\end{document}